\newtheorem{proposition}{Proposition}
\newtheorem{theorem}{Theorem}
\newtheorem{definition}{Definition}
\def\N{{\mathbb N}}
\def\P{{\mathbb P}}
\def\E{{\mathbb E}}
\def\Q{{\mathbb Q}}
\def\k{\kappa_s}
\def\p{p_s}
\newcommand\ind[1]{\mathbbm{1}_{\{#1\}}}
\def\cal{\mathcal}
\def\Var{\mathrm{Var}}
\def\etal{{\em et al. }}
\author[Chabchoub]{Yousra Chabchoub}
\address[Y. Chabchoub, C. Fricker, Ph. Robert]{INRIA Rocquencourt,  RAP
    project, Domaine de Voluceau, 78153 Le Chesnay, France. }
\email{Yousra.Chabchoub@inria.fr}
\email{Christine.Fricker@inria.fr}
\email{Philippe.Robert@inria.fr}
\author[Fricker]{Christine Fricker}
\author[Guillemin]{Fabrice Guillemin}
\address{Orange Labs, F-22300 Lannion, France}
\email{Fabrice.Guillemin@orange-ftgroup.com}
\author[Robert]{Philippe  Robert}
\title[Statistical Characterization of Flows in Internet Traffic]{On the Statistical Characterization of Flows in Internet Traffic with Application to Sampling}
\date{\today}
\keywords{Flow statistics, Statistical Models, Pareto distribution, Poisson Approximation.}
\begin{document}
\begin{abstract}
A new  method of  estimating some statistical characteristics of TCP flows in the
Internet is developed  in this  
paper.  For this  purpose, a new set  of random variables (referred to  as observables) is
defined. When dealing with sampled traffic,  these observables can easily be computed from
sampled data.   By adopting a convenient  mouse/elephant dichotomy also  {\em dependent on
traffic},  it  is   shown  how  these  variables  give  a  reliable  statistical
representation  of  the number of packets transmitted by large  flows during successive time
intervals with an appropriate duration.   A mathematical  framework  is developed to
estimate the accuracy of the  method.  As an application, it is shown how one can estimate
the number of large  TCP flows when  only sampled traffic is  available.  The algorithm
proposed is  tested against experimental data collected  from different types of IP networks. 
\end{abstract}

\maketitle

\hrule

 \tableofcontents 

\vspace{-5mm}

\hrule

\vspace{5mm}

\section{Introduction}
In Internet traffic  a flow is  classically defined as the  set of those  packets with  the same
source and  destination IP addresses  together with the  same source and  destination port
numbers and the same protocol type.  It  is well known that if large TCP flows carry the
prevalent  part of traffic (in Bytes), most of flows are small (in number of packets).  A formal
definition of ``large''  and  ``small''  will be given later in the paper. As it will  be seen, it may depend
on the  context; in a first step, the discussion  is kept informal. 

We investigate in  this paper how to characterize the statistical  properties of the sizes
of large flows (notably their number of packets)  in Internet traffic. It is commonly observed
in the  technical literature and in  real experiments that  the total size (in  packets or
bytes)  of  such  flows has  a  heavy  tailed  distribution.  In practice,  however,  this
characterization  holds only  for very  large values  of  the flow  size. Consequently, in order  to
accurately estimate the tail of the  size probability distribution, a large number of large
flows is  necessary. To increase the  sample size when  empirically estimating probability
distribution tails, one is led to increase  the length of the observation period.  But the
counterpart is that the distribution of the flow  size can no more be described in terms of simple
probability distributions,  of the Pareto type  for example. This  is due to the  fact that
traffic is not stationary over long time  periods, for instance because of daily
variations of interactive services (video, web, etc.). 

Actually, numerous approaches  have been proposed in the technical  literature in order to
model large flows as well as their superposition properties. One can roughly classify them
in  two categories: signal  processing models  and statistical  models.  Using  ideas from
signal     processing,     Abry    and     Veitch~\cite{Abry},     see    also     Feldman
\etal~\cite{279346,300387}  and  Crovella   and  Bestravos~\cite{Crovella},  describe  the
spectral properties  of the time series associated  with IP traffic by  using wavelets. In
this way,  a characterization of long  range dependence (the Hurst  parameter for example)
can be provided. Straight lines in the  log-log plot of the power spectrum support some of
the ``fractal''  properties of the IP  traffic, even if they  may simply be  due to packet
bursts  in  data  flows. See  Rolland  \etal~\cite{Rolland}. 

Signal  processing  tools  provide
information on aggregated traffic but not on characteristics on individual TCP flows, like
the  number   of  packets  or  their   transmission  time.   For   statistical  models,  a
representation  with  Poisson  shot  noise  processes  (and  therefore  some  independence
properties)  has  been  used  to  describe  the  dynamics of  IP  traffic,  see  Hohn  and
Veitch~\cite{1133559},  Duffield \etal~\cite{Duffield2}, Gong \etal  \cite{Gong}, Barakat \etal    \cite{Barakat}   and    Krunz   and
Makowski~\cite{KM}   for    example.    In    Ben
Azzouna       \etal~\cite{ICC},       Loiseau \etal~\cite{Loiseau,Loiseau:02} and Gong
\etal\cite{Gong}, the distribution of the size of large flows  is represented by a Pareto
distribution, i.e. a probability  distribution whose tail decays  on a polynomial scale. \nocite{Mitzen},

The starting point of some of these analyses is  the need for understanding the relation between the distribution of the
number $\hat{S}$ of sampled packets when performing packet sampling and the distribution
of the flow size $S$. The problem can be described as follows:   $\P(\hat{S}=j)=Q(\P(S=\cdot), j)$, $j\geq 1$, with
\[
Q(\phi,i)=p^j\sum_{\ell=j}^{+\infty} \binom{\ell}{j}(1-p)^{\ell-j}\phi(\ell).
\]
The problem then  consists of finding a distribution $\phi_0$ maximizing some functional ${\cal L}(\phi)$
so that the relation $\P(\hat{S}=j)=Q(\phi, j)$ holds. See Loiseau
\etal~\cite{Loiseau:02}  for an extensive discussion of the current literature where our
algorithm is called  ``stochastic counting''. As it will be seen in the following, we will not rely on
the maximum likelihood ratio of distributions in our approach but on estimations of some
averages to estimate some key parameters. 

\medskip
\paragraph{\bf Statistical Characterization Method}
We  develop  in this  paper  an  alternative method  of obtaining  a   statistical
description  of  the size  of  large  flows  in IP  traffic  by  means of  a  Pareto
distribution: Statistics  are collected during  successive time windows of  limited length
(instead of one single  time window for the whole trace).  It  must be emphasized that this
characterization  in terms  of  a Pareto  distribution  does not  rely  on the  asymptotic
behavior of the tail  distribution but only on statistics on some  range of values for the
sizes of flows.  

The  advantage  of  the proposed  method  is  that  with  a  careful procedure,  a  simple
statistical characterization  is possible and seems to  be quite reliable as  shown by our
experiments  for various sets  of traffic  traces.  The  intuitive reason  for considering
short time periods is that on such  times scales, flows exhibit only one major statistical
mode (typically  a Pareto behavior).  In larger  time windows, different modes  due to the
wide variety of flows and non-stationarity in IP traffic necessarily appear.  (See Feldman
\etal~\cite{300387}.)   This  approach  allows  us  to establish  a  reliable  statistical
characterization of  flows which is used to  infer information from sampled  traffic as it
will be seen.  The counterpart of that the distribution of the {\em total} size of a large
flow (obtained when considering the complete traffic trace) cannot be obtained directly in
this way since the trace is cut into small pieces.

An algorithm is proposed to obtain  the statistical representation of large flows when all
the  packets of  the  trace  are available.   The  constants used  in  our algorithms  are
explicitly expressed as  either universal constants (independent of  traffic) or constants
depending on traffic : Length of  the observation window, definition of TCP flows referred
to as  large flows,  etc.  The procedure  invoked to  estimate flow statistics  should not
depend on some hidden pre-processing of  the trace. Our algorithms determine {\em on-line}
the constants depending on the traffic.  This  is, in our view, one important aspect which
is sometimes neglected in the technical literature

\medskip
\paragraph{\bf Application to Sampled Traffic}
The  basic motivation  for developing  a  flow characterization  method is  to infer  flow
characteristics from sampled  data.  This is notably the case  for sampling processes such
as the 1-out-of-$k$  sampling scheme implemented by CISCO's  NetFlow \cite{NetFlow}, which
greatly degrades information on flows. What we  advocate in this paper is that it is still
possible  to  infer   relevant  characteristics  on  flows  from   sampled  data  if  some
characteristics of the flow size can be  confidently described by means of a simple Pareto
distribution.   By using  the statistical  representation  described above,  we propose  a
method of inferring the number of large flows from sampled traffic.

The proposed method relies on a  new set of
random variables,  referred to  as observables and  computed in successive  time intervals
with fixed length.  Specifically, these random variables count the number of flows sampled
once,  twice or  more  in the  successive  observation windows.  The  properties of  these
variables can be obtained through  simple characteristics, in particular {\em mean values}
of variables  instead of {\em remote quantiles}  of the tail distribution,  which are much
more difficult  to accurately estimate.   By developing a convenient  mathematical setting
(Poisson approximation methods),  it is moreover possible to  show that quantities related
to  the observables  under consideration  are close  to Poisson  random variables  with an
explicit bound on the error.  This Poisson approximation is the key result to estimate the
total number of large flows.  


\medskip
\paragraph{\bf Organization of the paper}
The organization of the paper is as  follows. A statistical description of large TCP flows
is  presented in  Section~\ref{experiments}, this  representation is  tested  against five
exhaustive sets  of traffic  traces: three from  the France Telecom  (FT) commercial  IP network
carrying residential ADSL  traffic and  two others from Abilene  network.  An algorithm is
developed in this section to compute the characteristics  of the Pareto distributions
describing flows. In Section~\ref{sampledtraffic}, some assumptions on sampled traffic are
introduced and the observables for describing traffic are defined. The  mathematical 
properties are analyzed in light  of Poisson approximation methods in
Section~\ref{properties}.  The results developed in  this section  are  crucial to  infer
the statistics  of an  IP  traffic from  sampled data. Experiments with  the five sets of
sampled traces used in this  paper are presented and  discussed   in  Section~\ref{ssec}.
Some   concluding  remarks  are   presented  in  Section~\ref{conclusion}.

\section{Statistical Properties of Flows}\label{experiments}
This section  is devoted to  a statistical study  of the size (the number  of packets) of
flows in a  limited time window  of duration  $\Delta$.  The  goal of  this section  is
show  that  a simple statistical representation  of the flow size  can be
obtained for various sets of traffic traces.

\subsection{Assumptions and Experimental Conditions}

\subsubsection*{The sets of traces used for testing theoretical results}
For the experiments carried out in the following sections, several sets of traces will be considered: Commercial IP traffic, namely ADSL
traces from the France Telecom (FT) IP collect network,  and traffic issued from campus networks (Abilene III traces). Their
characteristics are given in Table~\ref{traces}.

\begin{table*}[hbtp]
\caption{Characteristics of traffic traces considered in experiments.\label{traces}}
\begin{center}
\begin{tabular}{lrcl}\hline
Name& Nb. IP packets & Nb. TCP Flows&Duration\\ \hline
ADSL Trace A& 271 455 718& 20 949 331 & 2 hours\\ 
ADSL Trace B Upstream &   54 396 226&2 648 193&2 hours \\ 
ADSL Trace B Downstream &53 391 874 &2 107 379 &2 hours\\ 
Abilene III Trace A &62 875 146 & 1 654 410 &  8 minutes\\ 
Abilene III Trace B & 47 706 252 &1 826 380 &8 minutes \\ \hline
 \end{tabular}
\end{center}
\end{table*}

The Abilene traces 20040601-193121-1.gz (trace A) and 20040601-194000-0.gz (trace B) can
be found at the url  {http://pma.nlanr.net/Traces/Traces/long/ipls/3/}.

\subsubsection*{Time Windows}
Traffic will  be observed in successive  time windows with length  $\Delta$.  In practice,
the  quantity $\Delta$  can vary  from a  few seconds  to several  minutes  depending upon
traffic characteristics  on the link  considered. 

The ideal value of $\Delta$ actually depends on the targeted application. For  the design
of network elements considering the flow level (e.g., flow aware routers, measurement
devices, etc.), it is necessary to estimate the requirements in terms of memory to store
the different flow descriptors. In this context, $\Delta$ may be of the order of few
seconds. The same order of magnitude is also adapted to  anomaly detection,  for instance
for detecting  a sudden  increase in  the  number of flows.  For the  computation of
traffic  matrices, $\Delta$  can be  several minutes  long (typically 15 minutes). In our
study, the ``adequate'' values for $\Delta$ are of the order of several seconds. See the
discussion below. 

\subsubsection*{Mice and Elephants}

With regard to the analysis of the  composition of traffic, in light of earlier studies on
IP traffic  (see Estan and  Varghese~\cite{859719}, Papagiannaki \etal~\cite{Taft}  or Ben
Azzouna \etal~\cite{Nadia}), two  types of flows are identified:  small flows with few
packets (referred to as  mice) and the other flows will be referred  to as  elephants.  In
commercial IP  traffic, this simple  traffic  decomposition  can be justified  by  the
predominance  of web  browsing  and peer-to-peer traffic  giving rise to  either signaling
and  very small file transfers  (mice) or else file downloads (elephants).

This  dichotomy may  be  more delicate  to  verify in  a different  context  than the  one
considered in Ben Azzouna \etal~\cite{Nadia}.  For  LAN traffic, for example, there may be
very large  amounts of  data transferred at  very high speed.  As it  will be seen  in the
various IP traces used in our analysis,  the distinction between mice and elephants has to
be handled with  care and in our case  is dependent on  the type of
traffic  considered.  The distinction  between the  constants depending  on the  trace and
``universal'' constants is, in our view,  a crucial issue. It amounts to precisely stating
which constants are {\em depend} on traffic.  This aspect is generally (unduly in our opinion)
neglected in  traffic measurement studies.  In  particular, the variable  $\Delta$ and the
dichotomy mice/elephants are dependent on the trace, as explained in the next section.

\subsection{Heavy Tails}
\label{experimentsbis}
The fact that the distribution of the size $S$ of a large TCP flow is heavy tailed is well
known.  Experiments and theoretical  results on  the superposition  of ON-OFF  heavy tailed
traffic  have  justified  the  self  similar  nature  of  IP  traffic,  see  Crovella  and
Bestravos~\cite{Crovella}.  Although the heavy tailed  property of the size of large flows
is commonly admitted, little attention has been paid to identify properly a class of heavy
tailed  distributions  so  that the  corresponding  parameters  can  be estimated  for  an
arbitrary traffic trace with a significant duration.

One of the  reasons for this situation  is that the most common  heavy tailed distributions
$G(x)=\P(S\geq  x)$  (e.g., Pareto, i.e., $G(x)=C/x^\alpha$ for $x\geq  b$ and some $\alpha >0$,  or Weibull, i.e.,  $G(x)=\exp(-\nu
x^\beta)$ for some $\beta >0$ and $\nu>0$)  have a  very  small number  of parameters  and consequently  a
limited of  number of possible  degrees of  freedom for describing the  distribution of the  sizes of
flows.  For  this reason, such  a distribution can  rarely  represent the statistics  of the
total number of packets transmitted by a flow in a trace of arbitrary duration.

As  a matter  of  fact, if  a  traffic trace  is sufficiently  long,  some non  stationary
phenomena may  arise and the  diversity of file  sizes may not be  captured by one  or two
parameters. For example, with a Pareto distribution, the function $x\to G(x)$ in a log-log
scale should be  a straight line. The statistics  of the file sizes in the  traces used in
our experiments are depicted in Figure~\ref{fig1} and~\ref{fig11} for an ADSL traffic
trace from the France Telecom backbone IP collect network and for a traffic trace from
Abilene network, respectively. 

\begin{figure}[hbtp]
\begin{center}
\rotatebox{-90}{\scalebox{0.25}{\includegraphics{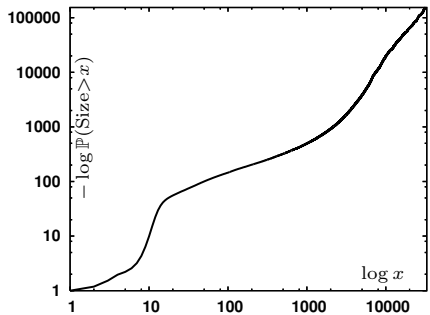}}}
\put(-180,0){\rotatebox{90}{\put(-80,-42){$\scriptstyle-\log \P\left(\text{Size}>x\right)$}}}
\put(-35,-110){$\scriptstyle \log x$}
\end{center}
\caption{Statistics of  the number of packets $S$ of a flow  for  ADSL A (2 hours):  the quantity  $-\log(\P(S>x))$ as a function of $\log(x)$. \label{fig1}}
\end{figure}

\begin{figure}[hbtp]
\begin{center}
\rotatebox{-90}{\scalebox{0.25}{\includegraphics{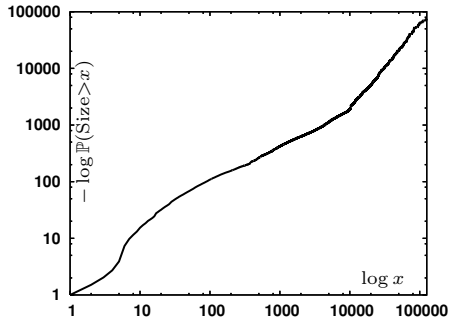}}}
\put(-180,0){\rotatebox{90}{\put(-80,-42){$\scriptstyle-\log \P\left(\text{Size}>x\right)$}}}
\put(-35,-110){$\scriptstyle \log x$}
\caption{Statistics of  the number of packets  $S$ of a flow  for ABILENE A trace (8 minutes):  the quantity
  $-\log(\P(S>x))$ as a function of $\log(x)$. \label{fig11}}
\end{center}
\end{figure}

Figure~\ref{fig1} and~\ref{fig11} clearly show that for the two traffic traces considered,
the file size exhibits a multimodal behavior: At least {\em several} straight lines should
be  necessary to properly describe   these  distributions. These  figures also  exhibit the
(intuitive) fact  that has been noticed in  earlier experiments: The longer  the trace is,
the more  marked is the multimodal  phenomenon.  (See Ben Azzouna  \etal~\cite{ICC} for a
discussion.)

The {\em key observation} when characterizing a traffic trace is the fact that if the duration $\Delta$ of the successive 
time intervals used for computing traffic parameters  is appropriately chosen, then the distribution of
the size  of the main  contributing flows in  the time interval  can be represented  by a
Pareto distribution.  More precisely, there exist
$\Delta$,  $B_{min}$, $B_{max}$  and  $a>0$ such  that if  $S$  is the  number of  packets
transmitted by a flow in $\Delta$ time units, then $\P(S\geq x \mid S\geq B_{min})\sim
P_\alpha(x)$  for $B_{min}\leq x\leq B_{max}$ with 
\begin{equation}\label{pareto}
P_\alpha(x)\stackrel{\text{def.}}{=}\left(\frac{B_{min}}{x}\right)^a, \text{ for } x\geq B_{min},
\end{equation}
and furthermore the proportion of large  flows with  size greater than $B_{max}$ is less than $5\%$.
The parameter $B_{min}$ is usually referred to as the location parameter and $a$ as the
shape parameter.    

In other words,  if the time interval  is sufficiently small then the  distribution of the
number of packets  transmitted by a large flow  has one dominant Pareto mode and  therefore can
confidently be characterized by a unique Pareto distribution.  The  algorithm used  to  validate this  result is  described in Table~\ref{algo}.  It is  run from  the beginning  of  the trace;  in practice  a couple  of minutes  is
sufficient to obtain  results for the constants $\Delta$, $B_{min}$, $B_{max}$.  The algorithm
is of course valid  when the total trace is available for at  least an interval of several
minutes.  In  the case  of sampled traffic for which this  algorithm cannot be  used, another
method will be proposed in Section~\ref{sampledtraffic}.

\begin{table}[hbtp]
\caption{Algorithm for Identifying $\Delta$ and the Pareto Distribution.\label{algo}}
\hrule

\begin{itemize}
\item[---] { $\Delta$  is fixed so that at least  $1000$ flows  have more than  $20$ packets.} 
\item[---] { $B_{max}$ is defined  as the smallest integer such that  less than $5\%$ of the flows  have a size greater than $B_{max}$.}
\item[---]A Least  Square Method, see Deuflhard and Hohmann~\cite{Num} for example, is
  performed  to get a linear interpolation  in a log-log   scale  of   the  distribution
  of sizes   between  $B_{min}$  and   $B_{max}$.  The constant  $B_{min}$ is  chosen  as
  the  smallest integer  such that  the $L_2$-distance in the sense of least  square method
  with  the approximating straight line   is less than  $2.10^{-3}$. The slope of  the
  line   gives the value of the parameter $a$.  
\end{itemize}

\hrule

\end{table}

The  quantity $B_{min}$  defines the  boundary  between mice  and {\em  elephants} in  the
trace. A {\em mouse} is a flow with  a number of packets less than $B_{min}$.  An elephant
is a  flow such that its  number of packets during  a time interval of  length $\Delta$ is
greater  than or equal  to $B_{min}$.   By definition  of $B_{max}$,  flows whose  size is
greater than $B_{max}$ represent a small fraction of the elephants.

\subsection{Experiments with Synthetic and Real Traffic Traces}
Some   experiments  have  been   done  using   artificial  traces   with  a   real  Pareto
distribution. For  these traces,  the algorithm described in Table~\ref{algo}  has been used  without any
modification: A  time window is defined  when at least  $1000$ flows of size  greater than
$20$ packets are detected.   As it can be seen, the identification  of the exponent $a$ is
quite good. Note that, because only  Pareto distributed flows are present the minimal size
$B_{min}$ of elephants is smaller than in real traffic.

\begin{figure}[hbtp]
\begin{center}
\subfigure[Pareto  $a=1.85$. Estimation: $\hat{a}=1.84$,
  $B_{min}=9$, $B_{max}=100$]{
\rotatebox{-90}{\scalebox{0.25}{\includegraphics{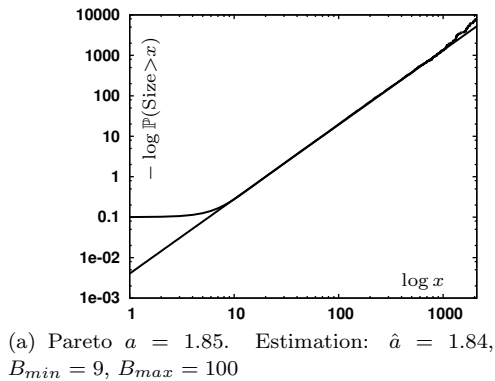}}}
\put(-170,10){\rotatebox{90}{\put(-80,-42){$\scriptstyle-\log \P\left(\text{Size}>x\right)$}}}
\put(-35,-110){$\scriptstyle\log x$}
}

\subfigure[Pareto  $a=2.5$. Estimation: $\hat{a}=2.48$,
  $B_{min}=11$, $B_{max}=65$]{
\rotatebox{-90}{\scalebox{0.25}{\includegraphics{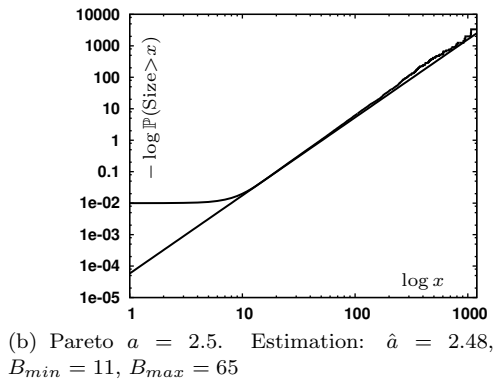}}}
\put(-170,10){\rotatebox{90}{\put(-80,-42){$\scriptstyle-\log \P\left(\text{Size}>x\right)$}}}
\put(-35,-110){$\scriptstyle\log x$}
}
\end{center}
\caption{Synthetic traces with $10^6$ flows with a Pareto distribution}\label{synth}
\end{figure}

Experimental results with real traces, for the ADSL A and Abilene A traffic traces, are displayed in
Figures~\ref{fig2a} and \ref{fig2b}, respectively.  The same
algorithm has been run for the ADSL trace B Upstream and Downstream as well as for the
Abilene III B trace. The benefit of the algorithm is that the distribution of the number of packets in
elephants can always be represented by a unimodal Pareto distribution if the duration of
$\Delta$ is adequately chosen by using the algorithm given in Table~\ref{algo}. Results are summarized in Table~\ref{tab2}. 

\begin{table}[hbtp]
\caption{Statistics of the elephants for the different traffic traces.\label{tab2}}
\begin{tabular}{llllll}\hline
&{\tiny ADSL A}& {\tiny ADSL B Up} & {\tiny ADSL B Down }&{\tiny Abilene A} &{\tiny Abilene  B}  \\ \hline
$\Delta$ (sec)& 5& 15&15&2&2\\
$B_{min}$&20 & 29&39&89&79\\ 
$B_{max}$&94& 154&128&324&312 \\ 
$a$&1.85&1.97&1.50&1.30&1.28\\ 
\hline
\end{tabular}
 \end{table}

\begin{figure}[hbtp]
\begin{center}
\subfigure[ ADSL A trace --  $\Delta = 5$s]{
{{\rotatebox{-90}{\scalebox{0.25}{\includegraphics{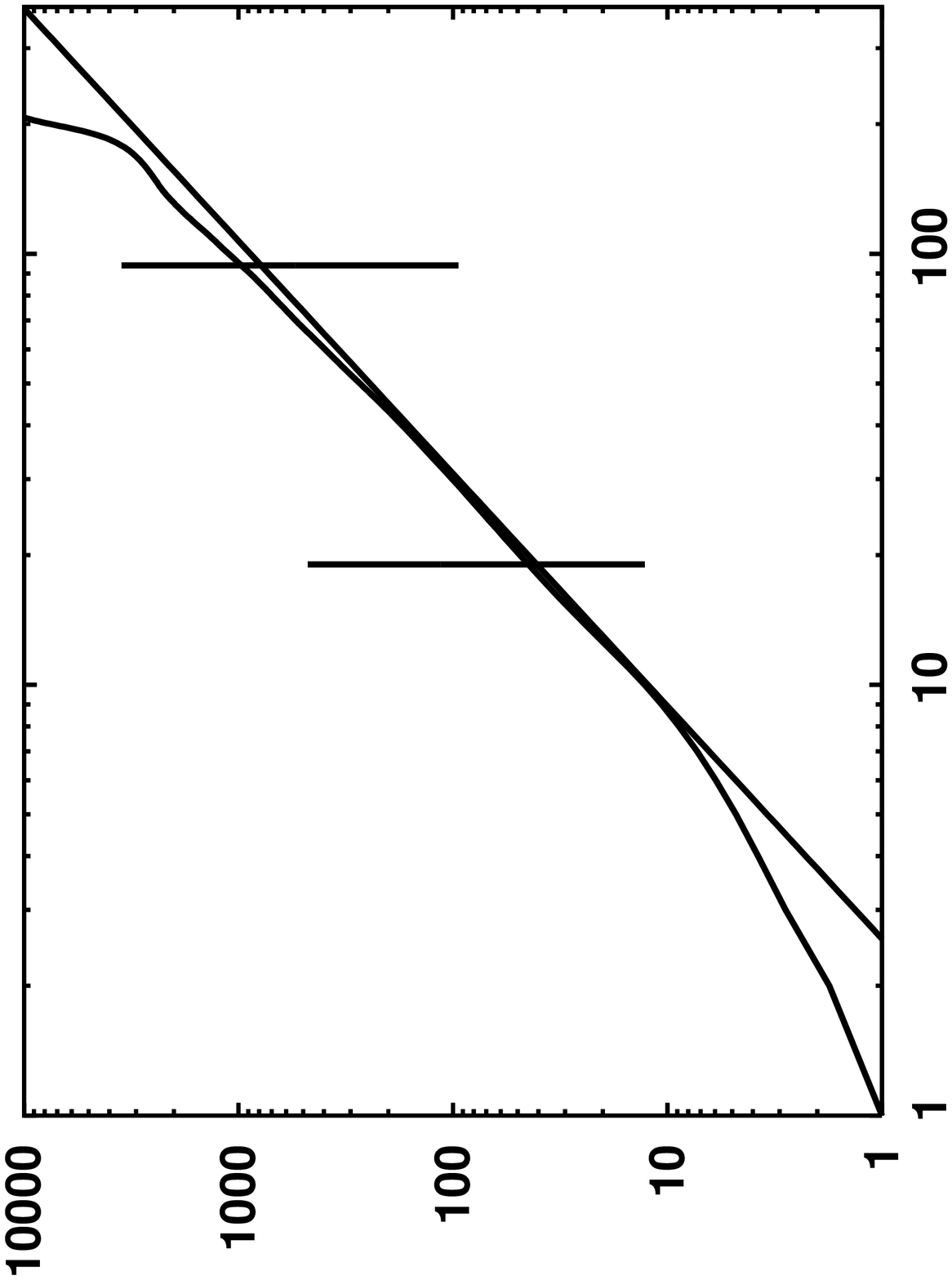}}}}
\put(-180,0){\rotatebox{90}{\put(-80,-42){$\scriptstyle-\log \P\left(\text{Size}>x\right)$}}}
\put(-35,-110){$\scriptstyle \log x$}
\put(-50,-80){$ B_{max}$}
\put(-90,-30){$ B_{min}$}}
}
\subfigure[ADSL B Down trace -- $\Delta = 15$ seconds]{
{{\rotatebox{-90}{\scalebox{0.25}{\includegraphics{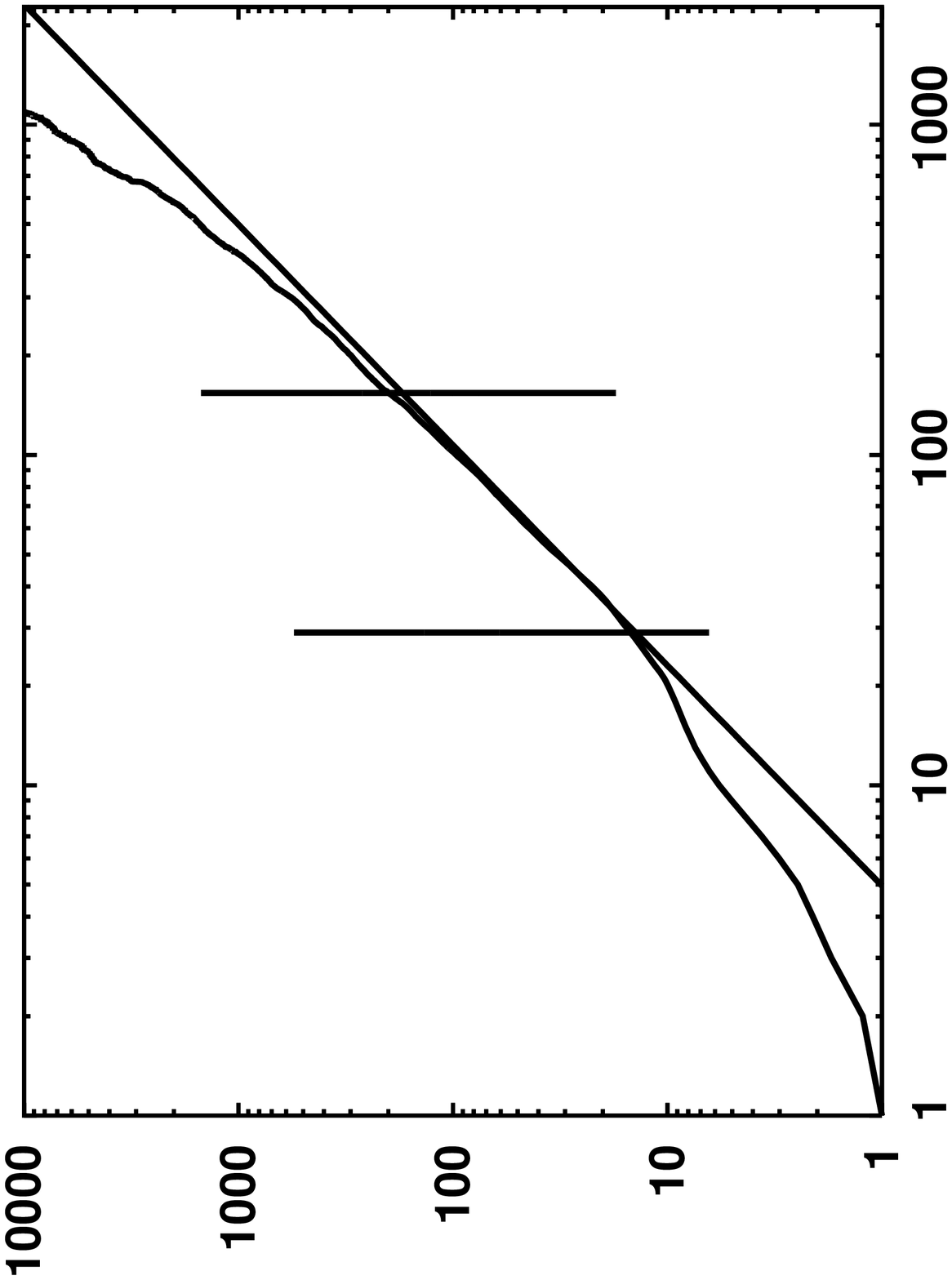}}}}
\put(-180,0){\rotatebox{90}{\put(-80,-42){$\scriptstyle-\log \P\left(\text{Size}>x\right)$}}}
\put(-35,-110){$\scriptstyle\log x$}
\put(-50,-80){$ B_{max}$}
\put(-90,-30){$ B_{min}$}}
}

\subfigure[ADSL B Up trace -- $\Delta =15$ s]{
{{\rotatebox{-90}{\scalebox{0.25}{\includegraphics{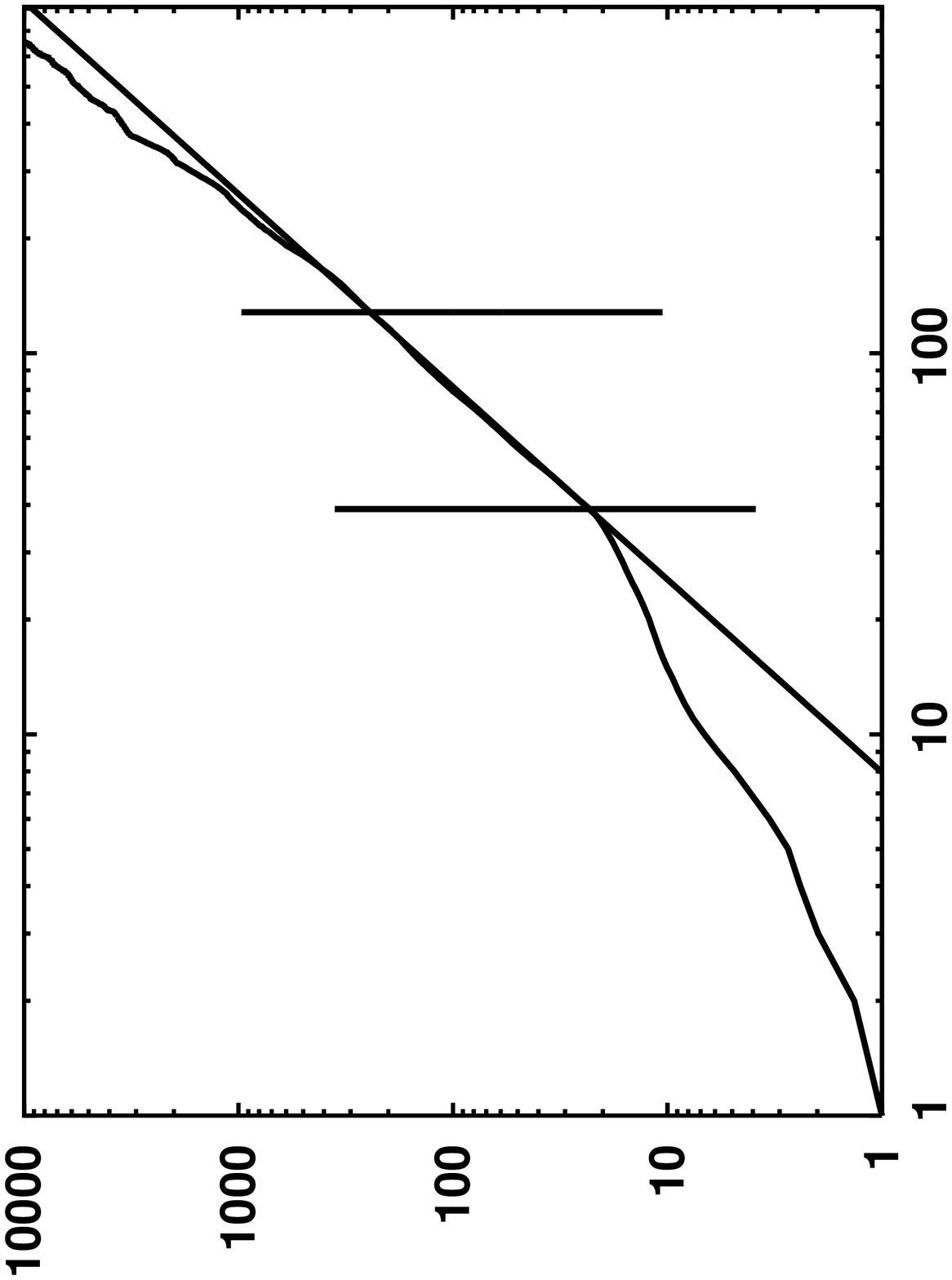}}}}
\put(-180,0){\rotatebox{90}{\put(-80,-42){$\scriptstyle-\log \P\left(\text{Size}>x\right)$}}}
\put(-35,-110){$\scriptstyle\log x$}
\put(-50,-80){$ B_{max}$}
\put(-90,-30){$ B_{min}$}}
}
\end{center}
\caption{Statistics of  the flow size (number of packets) in a time interval of length $\Delta=15$\label{fig2a}}
\end{figure}

\begin{figure}[hbtp]
\begin{center}
\subfigure[Abilene A trace -- $\Delta = 2$ s]{
{{\rotatebox{-90}{\scalebox{0.25}{\includegraphics{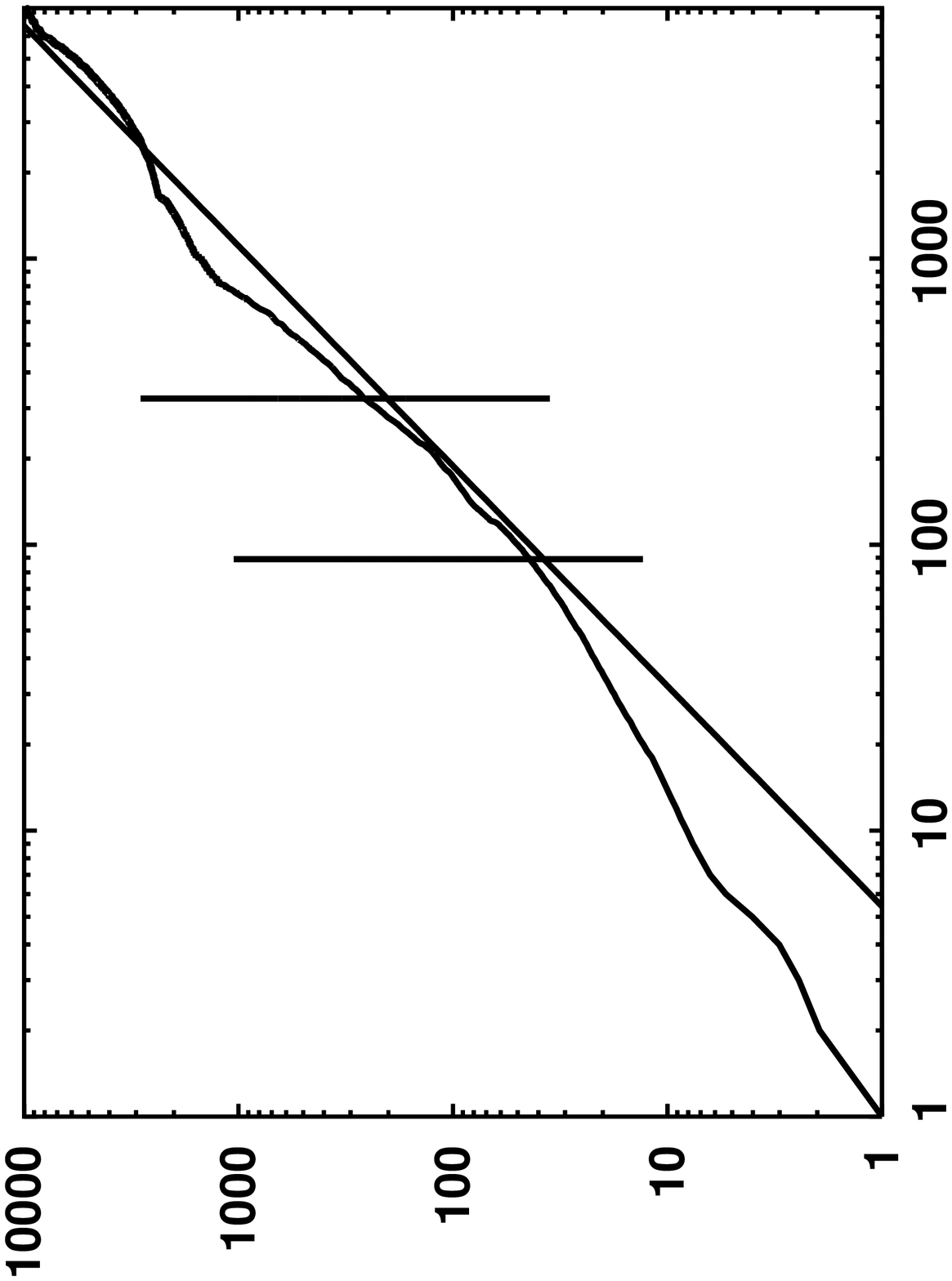}}}}
\put(-180,0){\rotatebox{90}{\put(-80,-42){$\scriptstyle-\log \P\left(\text{Size}>x\right)$}}}
\put(-35,-110){$\scriptstyle\log x$}
\put(-50,-80){$ B_{max}$}
\put(-90,-30){$ B_{min}$}}
}
\subfigure[Abilene B trace -- $\Delta = 2$ s]{
{{\rotatebox{-90}{\scalebox{0.25}{\includegraphics{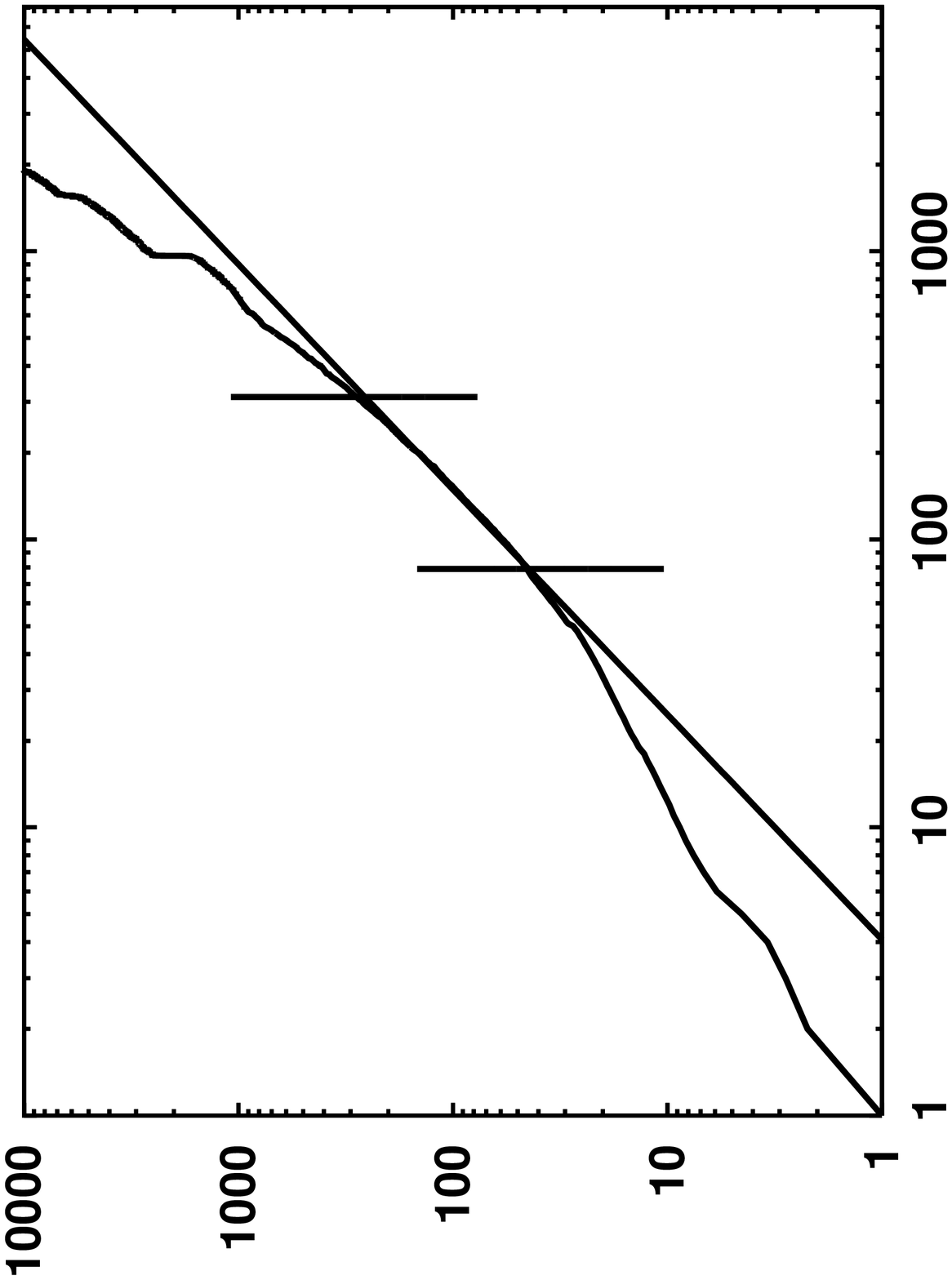}}}}
\put(-180,0){\rotatebox{90}{\put(-80,-42){$\scriptstyle-\log \P\left(\text{Size}>x\right)$}}}
\put(-35,-110){$\scriptstyle\log x$}
\put(-50,-80){$ B_{max}$}
\put(-90,-30){$ B_{min}$}}
}
\end{center}
\caption{Statistics of  the flow size  (number of packets)  in a time interval of length $\Delta$ for the Abilene traces.\label{fig2b}}
\end{figure}

\subsection{On the choice of parameters} 
We discuss in this section the various parameters used by the algorithm.

\subsubsection*{Fixed parameters and parameters depending on traffic}
There are four basic parameters for the model which are determined by the trace:
$\Delta$ (duration of time window for statistics), the range of values $[B_{min},B_{max}]$ for the Pareto
distribution and the exponent $a$ of this distribution.  These parameters are discussed
below.   

Additionally there are ``universal'' (i.e. independent of the trace): the minimal number
of flows to make statistics, set to $1000$ here,  the proportion, $5\%$, of flows of size
$\geq B_{max}$,  and the level of accuracy, $2.10^{-3}$ here, of the least  square method
to determine $B_{min}$ and $B_{max}$.  

\subsubsection*{Parameter $B_{min}$}
It  turns out  that for  commercial (ADSL)  traffic, the  value of  $B_{min}$ is  close to
$20$. This  value is fairly  common in earlier  studies for classifying ADSL  traffic.  It
should be noted that this value is not at all universal since, in our view, it does depend
on traffic. The examples  with  Abilene traces,  see below, which  contain significantly
bigger elephants, shows  that the corresponding values should be  higher than $20$ (around
$80$ in our  example). 

The two types of traffic are  intrinsically different: ADSL traffic
is mainly composed  of peer to peer traffic (with  a huge number of small  flows and a few
file transfers  of limited size because of  the segmentation of large  files into chunks),
while  Abilene traffic comprises  large file  transfers issued  from campus  networks.  In
order to maximize the range for  the Pareto description, the variable $B_{min}$ is defined
as the smallest value for which the linear representation (in the log scale) holds.

\subsubsection*{Parameter $\Delta$}
This parameter $\Delta$ is determined in a simple way by our algorithm. 
According to the various  experiments, the parameter $\Delta$ can  be taken in some
range of values where the Pareto  representation still holds.  On the one hand, $\Delta$
has  to be taken large enough so that  sufficiently many  packets arrive  in time
intervals of  duration $\Delta$ to derive  reliable  estimations of  the Pareto distribution.
An experiment with ADSL  A trace with $\Delta=1s$ gives only $63$  flows of size more than
$20$ which  is not  enough  to obtain reliable  statistics.  A ``correct'' value  in this
case is $5s$. Experiments show that higher values (like $10s$) do not change significantly
the Pareto property observed in this case.

On the other hand, $\Delta$ should not  be too large so that the statistical properties (a
Pareto  distribution in our  case) can  be identified,  i.e., so  that the  statistics are
unimodal.  See  Figures~\ref{fig1}   and~\ref{fig11}  which  illustrate  situations  where
statistics are done on the complete trace,  i.e. when $\Delta$ is taken equal to the total
duration of the trace. In these examples,  the piecewise linear aspect of the curves suggests,
for both cases, there is at least a bi-modal Pareto behavior.

\subsection{Discussion}
As it will be seen in the following,  the above statistical model gives interesting results
to extract information from sampled traffic. It has nevertheless some shortcomings which
are now discussed. 

\subsubsection*{A partial information when $\Delta$ is small}.
It should be noted that the parameters computed in a time window of length $\Delta$ do not
give  a complete  description  of the  distribution  of the  size of  a  large  flow,  since
statistics  are done  over a  limited  time horizon.  The procedure  provides therefore  a
fragmented information.  

To obtain a complete description of the statistics of the size of
flows, it  would be necessary to relate the statistics from successive  time windows of
length $\Delta$.  We do  not know how to do that yet. Nevertheless, as  it will be seen in
the following, this fragmented information can be recovered from  sampled traffic and
it will be  used to give a good estimation  on the number of active large  flows at a given
time.  This  incomplete but useful description of  the statistics is, in  some sense, the
price to pay to have a simple  estimation of the statistics of flows.

\subsubsection*{An incomplete description of large  flows in a time window of size $\Delta$}
The representation with a Pareto distribution is for elephants (with size greater than $B_{min}$) 
whose size is less than $B_{max}$. In particular, it does not give  any information on the
statistics of flows with size greater than $B_{max}$. But note that, by definition, less that
$5\%$ of the total number of flows have a size greater than $ B_{max}$. This is however 
a source of errors when, as in Section~\ref{properties}, the Pareto representation
is used on the interval $[B_{min},+\infty]$ instead of $[B_{min},B_{max}]$

\section{Sampled Traffic: Assumptions and Definition of Observables}
\label{sampledtraffic}

In the previous section, an algorithm to describe the distribution of large flows by means
of a  unimodal distribution has  been introduced.   Now, it is  shown how to  exploit this
algorithm in the context of packet sampling  in the Internet. Packet sampling is a crucial
issue  when performing  traffic  measurements  in  high speed  backbone
networks. As a  matter of fact, a  fundamental problem related to the  computation of flow
statistics from  traffic crossing very high speed  transmission links is that,  due to the
enormous number of packets handled by routers, only a reduced amount of information can be
available to the network operator. 

Packet  sampling is in this context an efficient method
of reducing  the volume of data to  analyze when performing measurements  in the Internet.
One popular  technique consists  of picking up  one packet  every other $\k$  packets with
$\k=100$,  $500$,  $1000$   in  practice.   (This  sampling  scheme   is  referred  to  as
1-out-of-$\k$ packet  sampling in the  technical literature.)  This method  is implemented
for instance in  CISCO routers, namely NetFlow facility~\cite{NetFlow}  widely deployed in
operational networks today. It suffers  from different shortcomings well identified in the
technical literature, see for instance Estan \etal~\cite{Estan}.

We describe in this section the different  assumptions made on traffic in order to develop
an  analytical evaluation of  our method  of inferring  flow statistics.   Throughout this
paper, high speed transmission links (at least 1~Gbit/s) will be considered.

\subsection{Mixing condition} When observing traffic, packets are assumed to be sufficiently interleaved
so that those packets of a same flow  are not back-to-back but mixed with packets of other
flows.  This  introduces some  randomness  in the  selection  of  packets when  performing
sampling. In particular, when $K$ flows are active in a given time window and if  the
$i$th flow comprises $v_i$ packets during that period, then the probability of selecting a 
packet of the $i$th flow is assumed to be equal $v_i/(v_1+v_2+\cdots+v_K)$. This property will
be referred  to as  \emph{mixing condition} in  the following  and is formally  defined as
follows. A variant of this property is, implicitly at least, assumed in the existing
literature. See, e.g. Duffield \etal~\cite{Duffield} and Chabchoub \etal~\cite{Chabchoub:01}.

\begin{definition}[Mixing Condition]\label{defmix}
If $K$ TCP flows  are active during a time interval of  duration $\Delta$, traffic is said
to be  mixing if for  all $i$, $1\leq  i\leq K$, the  total number $\hat{v}_i$  of packets
sampled  from the  $i$th flow  during that time interval  has the  same distribution  as  the analog
variable in the following scenario: at each sampling instant a packet of the $i$th flow is
chosen with probability $v_i/V$ where $v_i$ is the number of packets of the $i$th flow and
$V=v_1+\cdots+v_K$.
\end{definition}

This amounts to claim  that with regard to sampling, the probability of selecting a packet of a
given flow is proportional to the total number of packets of this flow. 

One alternative  would consist of assuming that  the probability of selecting  a packet of
the $i$th  flow is  $1/K$, the  inverse of the  total number  of flows.   This assumption,
however,  does not take  into account  the respective  contributions  of the
different flows to the total volume and  thus may be inaccurate.   If all $K$ flows had
the same distribution with a  small variance, then this assumption  would not much differ from
the mixing condition. Note however that the variance of Pareto distributions can be
infinite if the shape parameter $a$ is less than $2$.  Hence, this leads us 
to suppose that the mixing condition holds  and that the probability of selecting a packet
from flow $i$ is indeed $v_i/V$.

\subsection{Negligibility assumption}

We consider traffic on  very high speed links and it then  seems reasonable to assume that
no flows contribute a significant proportion of global traffic. In other words, we suppose
that the contribution of  a given flow to global traffic is  negligible. In the following,
we go  one step further by assuming  that in any time  window, the number of  packets of a
given flow is negligible  when compared to the total number of  packets in the observation
window. By using the notation of the previous section, this amounts to assuming  that for any
flow $i$, the  number of packet $v_i$ is  much less than $V$. Furthermore,  we even impose
that the squared value of $v_i$ is much less than $V$. We specifically formulate the above
assumptions as follows.

\begin{definition}[Negligibility condition]\label{defeps}
In any window of length $\Delta$, the square of the number of packets of every flow is
negligible when compared to the total number of packets $V$ in the observation
window. There specifically exists some $0< \varepsilon \ll 1$ such that for all
$i=1,\ldots,K$, $v_i^2/V \leq \varepsilon$. 
\end{definition}
The above assumption implies that no flows are dominating when observing traffic on a high
speed transmission  link. Table~\ref{tabpois} shows that  this is the case  for the traces
used in  our experiments.  There is thus  no bias  in the sampling  process, which  may be
caused by the  fact that some flows are oversampled because  they contribute a significant
part of traffic. This assumption is  reasonable for commercial ADSL traffic because access
links are often the  bottlenecks in the network. For instance, ADSL  users may have access
rates of a few  Mbit/s, which are negligible when compared against  backbone  links of 1 to 10
Gbit/s.  Moreover,  the bit  rate achievable by  an individual  flow rarely exceeds  a few
hundreds of  Kbit/s.  In the case of  transit networks carrying campus  traffic, the above
assumption may be  more questionable since bulk data transfers may  take place in Ethernet
local area networks and individual flows may achieve bit rates of several Mbit/s.

\begin{table*}[hbtp]
\caption{The quantity $\E(v_1^2)/\E(V)$ for traffic traces considered in experiments.}\label{tabpois}
\begin{tabular}{|l|l|l|l|}\hline
Trace& $\Delta=5$sec & $\Delta=10$sec & $\Delta=15$sec\\ \hline
ADSL A & 0.000146&0.000159 &0.000168 \\ \hline
ADSL B up&0.001100 &--- &0.001335 \\ \hline
ADSL B Down& 0.002199&0.002543 &0.002732 \\ \hline
\end{tabular} 

\medskip

\begin{tabular}{|l|l|l|l|l|}\hline
Trace& $\Delta=1$sec & $\Delta=2$sec & $\Delta=3$sec& $\Delta=5$sec\\ \hline
Abilene A & 0.055001&0.068833 &0.064813&0.072768\\ \hline
\end{tabular} 

\medskip

\begin{tabular}{|l|l|l|}\hline
Trace& $\Delta=1$sec & $\Delta=2$sec \\ \hline
Abilene B &0.011786 &0.013804 \\ \hline
\end{tabular} 

\end{table*}

\subsection{The Observables}
We now introduce the different variables used to infer flow characteristics. These variables
are based  only upon sampled  data; they can  be evaluated when analyzing  NetFlow records
sent by  routers of an  IP network. For  this reason, these  variables are referred  to as
observables.   Because of  packet sampling,  recall that  the original  characteristics of
flows (for instance their duration or their original number of packets) cannot be directly
observed.

The observables  considered in  this paper  to infer flow  characteristics are  the random
variables $W_j$, $j  \geq 1$, where $W_j$ is  the number of flows sampled $j$  times during a
time interval of  duration $\Delta$.  The averages of the random variables $W_j$ are
in fact the key quantities used to infer the characteristics of flows from sampled data.

The random variables $W_j$, $j \geq 1$ are formally defined as follows: Consider a time interval of length
$\Delta$ and  let $K$ be the  total number of large flows  present in this  time interval.  Each
flow $i\in\{1,\ldots,K\}$ is  composed of $v_i$ packets in  this time interval. Let denote by $\hat{v}_i$
the number of times that  flow $i$ is sampled. The random variable $W_j$ is
simply defined by
\begin{equation}
\label{defWj}
W_j=\ind{\hat{v}_1=j}+\ind{\hat{v}_2=j}+\cdots+\ind{\hat{v}_K=j}.
\end{equation}

In  practice, if  $\Delta$ is  not too  large,  the data  structures used  to compute  the
variables $W_j$  are reasonably simple.   Moreover, as it  will be seen in  the following,
provided that  $\Delta$ is appropriately chosen,  the statistics of the  number of packets
transmitted by elephants  during successive time windows with  duration $\Delta$ are quite
robust.  Consequently, the variables $W_j$ inherit  also this property. When the number
of large  flows is  large, the  estimation of the  asymptotics of  their averages  from the
sampled traffic is easy in practice.  Theoretical results on these variables are derived in
the next section.

\section{Mathematical Properties of the Observables} \label{properties}

\subsection{Definitions and Le Cam's inequality}

For $j\geq 0$, the variable $W_j$  defined by Equation~\eqref{defWj} is a sum of Bernoulli
random  variables, namely
$$ W_j=\ind{\hat{v}_1=j}+\ind{\hat{v}_2=j}+\cdots+\ind{\hat{v}_K=j},  $$
where $\hat{v}_i$ is the  number of times that the $i$th flow  has been sampled.  If these
indicator functions  were independent,  by assuming that  $K$ is  large, one could  use to
estimate  the distribution  of  $W_j$ either  via a  Poisson approximation  (in  a rare  event
setting) or via a central limit theorem (in  a law of large numbers context).  Since the total
number of samples is known, the sum of the random variables $\hat{v}_i$ for $i=1,\ldots,K$
is known and then, the Bernoulli variables defining $W_j$ are {\em not} independent.

To overcome this  problem, we make use of  general results on the sum  of Bernoulli random
variables.  Let   us  consider   a  sequence  $(I_i)$   of  Bernoulli   random  variables,
i.e.  $I_i\in\{0,1\}$.  The  distance  in  total variation  between  the  distribution  of
$X=I_1+\cdots+I_i+\cdots$ and a Poisson  distribution with parameter $\delta>0$ is defined
by
\begin{multline*}
\|\P(X\in \cdot)-\P(Q_{\delta}\in\cdot)\|_{tv} \stackrel{\text{def.}}=   \sup_{A\subset \N} \
\left|\P(X\in A)-\P(Q_{\delta}\in A)\right| \\
= \frac{1}{2}\sum_{n\geq 0} \left|\P(X=n)-\frac{\delta^n}{n!}e^{-\delta}\right|.
\end{multline*}
The Poisson distribution $Q_{\delta}$   with mean  $\delta$ is such that 
\[
\P(Q_{\delta}=n)=\frac{\delta^n}{n!} \exp(-\delta).
\]
\noindent
Note that the total variation distance is a strong distance since it is uniform with
respect to all events, i.e.,  for all subset s $A$ of $\N$,
$$|\P(X\in A)-\P(Q_{\delta}\in A)|\leq \|\P(X\in \cdot)-\P(Q_{\delta}\in\cdot)\|_{tv}.$$

The  following result (see Barbour \etal~\cite{Barbour}) gives  a tight  bound on  the total  variation distance  between the
distribution of  $X$ and the  Poisson distribution with  the same expected value  when the
Bernoulli variables are independent. In spite of the fact that this  result  is not directly applicable in our case, we shall show in the following how to use it to obtain information on the distributions of the observables $W_j$.

\begin{theorem}[Le Cam's Inequality] 
If the random variables $(I_i)$ are independent and if $X = \sum_i I_i$, then
\begin{multline}\label{lecam}
\|\P(X\in \cdot)-\P(Q_{\E(X)}\in\cdot)\|_{tv}\leq \sum_{i} \P(I_i=1)^2 = \E(X)-\Var(X)
\end{multline}
\end{theorem}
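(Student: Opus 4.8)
The plan is to deduce Le Cam's inequality from a coupling representation of the total variation distance, reducing the statement to a one-dimensional Bernoulli-versus-Poisson comparison that is then aggregated using the stability of $\|\cdot\|_{tv}$ under convolution (i.e. under the addition of independent summands). Throughout write $p_i=\P(I_i=1)$, so that $\E(X)=\sum_i p_i$.

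First I would recall the coupling formula for the total variation distance: for any two $\N$-valued laws $\mu$ and $\nu$,
\[
\|\mu-\nu\|_{tv}=\min_{(U,V)}\P(U\neq V),
\]
where the minimum runs over all pairs $(U,V)$ with $U\sim\mu$ and $V\sim\nu$. The inequality $\P(U\neq V)\ge\|\mu-\nu\|_{tv}$ is immediate from the definition as a supremum over events, and the reverse is attained by the maximal coupling; I would simply quote this standard fact.

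The key step is the subadditivity of $\|\cdot\|_{tv}$ under convolution. Taking \emph{independent} optimal couplings $(U_1,V_1)$ and $(U_2,V_2)$ of $(\mu_1,\nu_1)$ and $(\mu_2,\nu_2)$, one has $U_1+U_2\sim\mu_1*\mu_2$, $V_1+V_2\sim\nu_1*\nu_2$, and the inclusion $\{U_1+U_2\neq V_1+V_2\}\subset\{U_1\neq V_1\}\cup\{U_2\neq V_2\}$, whence
\[
\|\mu_1*\mu_2-\nu_1*\nu_2\|_{tv}\le\|\mu_1-\nu_1\|_{tv}+\|\mu_2-\nu_2\|_{tv}.
\]
Iterating with $\mu_i$ the law of $I_i$ and $\nu_i=Q_{p_i}$, and using that a sum of independent Poisson variables is again Poisson with the summed parameter (so $Q_{p_1}*\cdots=Q_{\E(X)}$), I obtain
\[
\|\P(X\in\cdot)-\P(Q_{\E(X)}\in\cdot)\|_{tv}\le\sum_i\|\P(I_i\in\cdot)-\P(Q_{p_i}\in\cdot)\|_{tv}.
\]

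It then remains to bound each one-dimensional term and to check the final identity. Computing $\|\P(I_i\in\cdot)-\P(Q_{p_i}\in\cdot)\|_{tv}$ straight from the definition, by comparing the masses at $0$ and at $1$ and summing the Poisson tail over $\{n\ge2\}$, gives the closed form $p_i(1-e^{-p_i})\le p_i^2$; summing over $i$ yields the bound $\sum_i\P(I_i=1)^2$. The equality $\sum_i p_i^2=\E(X)-\Var(X)$ is then pure algebra: by independence $\Var(X)=\sum_i p_i(1-p_i)$, so $\E(X)-\Var(X)=\sum_i p_i^2$. The main obstacle is the convolution-stability lemma, which is precisely where the independence of the $I_i$ is used; the single-variable estimate and the variance identity are short computations.
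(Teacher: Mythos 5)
Your proof is correct and complete. Note, however, that the paper does not actually prove this theorem: it is quoted as a known result with a reference to Barbour, Holst and Janson, so there is no internal argument to compare against. Your route is the classical coupling proof: the maximal-coupling characterization of $\|\cdot\|_{tv}$, subadditivity of total variation under convolution of independent pairs (which is where the independence of the $(I_i)$ enters), the exact one-dimensional computation $\|\P(I_i\in\cdot)-\P(Q_{p_i}\in\cdot)\|_{tv}=p_i(1-e^{-p_i})\leq p_i^2$ (consistent with the $\tfrac12\sum_n|\cdot|$ normalization used in the paper), and the algebraic identity $\E(X)-\Var(X)=\sum_i p_i^2$ from $\Var(X)=\sum_i p_i(1-p_i)$. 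All four steps check out. The cited monograph's principal tool is the Stein--Chen method, which yields the sharper bound with the extra factor $\min(1,1/\E(X))$; your coupling argument is more elementary and entirely sufficient for the inequality as stated, which is all the paper uses in the proof of Proposition~\ref{boundobs}.
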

If $X$ is a Poisson distribution then $\Var(X)=\E(X)$, the above relation shows that to
prove the convergence to a Poisson distribution one has only to prove that the expectation
of the random variable is arbitrarily close to its variance.

\subsection{Estimation of the mean value of the observables}
\label{calculmean}

We  consider the  $1$-out-of-$\k$ deterministic  sampling technique,  where one  packet is
selected  every other $\k$  packets.  In  addition, we  suppose that  traffic on  the  observed link
   is   sufficiently    mixed   so   that   the   mixing    condition   given   by
Definition~\ref{defmix} holds  and that there are  no dominating flows in  traffic so that
the negligibility condition (Definition~\ref{defeps}) also pertains.

It is assumed that during a time interval of length $\Delta$, there are $K$ flows composed
of at  least $B_{min}$ packets,  where $B_{min}$ is defined  in Section~\ref{experiments}.
It has been seen  that the number of packets in these  flows follows a Pareto distribution
defined  by Relation~\eqref{pareto}  for some  exponent $a$  and parameters  $B_{min}$ and
$B_{max}$. Let $S$  be a random variable  whose distribution is given by
Relation~\eqref{pareto} for all $x\geq B_{min}$. From our experiments,  $S$ is the size of
a ``typical'' flow whose size is in the interval $[B_{min},B_{max}]$. See the discussion
at the end of Section~\ref{experiments} for the flows of size greater than $B_{max}$.  Of course
the sizes of mice are not represented by this random variable. The variable $V$ denotes
the total  number of packets in  the 
observation window, note that it includes not only the elephants but also the mice.

Note  that $V$ is the  sum of the number of  packets in elephants and  mice. If $v_i$ is
the number of packet  in the $i$th elephant,   then    $v_i$   has   the    same   Pareto
distribution   as    $S$   (i.e., $v_i\stackrel{\text{dist.}}{=}S$)  and  $V \geq
v_1+v_2+\cdots+v_K$.   The  difference $V  - v_1-v_2-\cdots-v_K$ is the number of packets of mice.

\begin{proposition}[Mean Value of the Observables]\label{boundobs}
If $K$ elephants are active in  a time window of length $\Delta$, the mean number $\E(W_j)$ of
flows sampled $j$ times,  $j\geq 1$,  satisfies the relation 
\begin{align}\label{asymp}
\left|\frac{\E(W_j)}{K}-\Q_{j}\right|\leq \p\E\left(\frac{S^2}{V} \right),
\end{align}
where $\Q$  is the probability distribution defined by 
\begin{align*}
\P(\Q = j) \stackrel{def}{=} \Q_{j}=\E\left(\frac{{\left(\p S\right)}^j}{j!}e^{-{\p}S}\right),
\end{align*}
and $\p=1/\k$ is  the sampling rate.
\end{proposition}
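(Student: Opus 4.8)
The plan is to evaluate $\E(W_j)$ flow by flow and to recognise each flow's sampled count $\hat v_i$ as a sum of independent Bernoulli variables, so that Le Cam's inequality \eqref{lecam} applies. By Equation~\eqref{defWj} and linearity of expectation, $\E(W_j)=\sum_{i=1}^K\P(\hat v_i=j)$, hence it is enough to control each term $\P(\hat v_i=j)$ and then sum. First I would condition on the flow sizes $(v_\ell)$ and on the total volume $V$. Under $1$-out-of-$\k$ sampling of $V$ packets there are $M=\p V$ sampling instants, and by the mixing condition (Definition~\ref{defmix}) each instant independently selects a packet of flow $i$ with probability $v_i/V$. Conditionally, $\hat v_i=\sum_{m=1}^M I_{i,m}$ is therefore a sum of $M$ independent Bernoulli variables with parameter $v_i/V$, i.e. a binomial of mean $Mv_i/V=\p v_i$.

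Second, I would apply Le Cam's inequality to this conditional binomial. With $\E(\hat v_i\mid (v_\ell),V)=\p v_i$, the bound $\sum_i\P(I_i=1)^2$ becomes
$$\|\P(\hat v_i\in\cdot\mid(v_\ell),V)-\P(Q_{\p v_i}\in\cdot)\|_{tv}\leq \sum_{m=1}^M (v_i/V)^2 = M\,\frac{v_i^2}{V^2}=\p\,\frac{v_i^2}{V}.$$
Specialising the total variation distance to the singleton event $A=\{j\}$ and recalling $\P(Q_{\p v_i}=j)=(\p v_i)^j e^{-\p v_i}/j!$, this yields the conditional estimate
$$\left|\P(\hat v_i=j\mid (v_\ell),V)-\frac{(\p v_i)^j}{j!}e^{-\p v_i}\right|\leq \p\,\frac{v_i^2}{V}.$$

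Third, I would integrate over the flow sizes. Taking expectations and using that $v_i$ has the same Pareto law as $S$, the main term becomes $\E\big((\p v_i)^j e^{-\p v_i}/j!\big)=\Q_j$, while the error is bounded by $\p\,\E(v_i^2/V)=\p\,\E(S^2/V)$; since the expectation of an absolute value dominates the absolute value of the expectation, the inequality survives the integration, giving $|\P(\hat v_i=j)-\Q_j|\leq \p\,\E(S^2/V)$ for every $i$. Summing over $i=1,\dots,K$, using the triangle inequality once more, and dividing by $K$ then delivers exactly \eqref{asymp}.

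The delicate points all lie in setting up the conditional binomial. One must justify that the per-instant selections are independent across the $M=\p V$ sampling times, which is precisely the content of the mixing condition as formalised, so that Le Cam's theorem applies; and one must identify the sample count $M$ with $\p V$ despite the integer rounding inherent in deterministic $1$-out-of-$\k$ sampling, an approximation I would treat as negligible. A second subtlety is that $v_i$ and $V$ are \emph{dependent} (the volume $V$ contains $v_i$ together with the mice), so the identification $\E\big((\p v_i)^je^{-\p v_i}/j!\big)=\Q_j$ must rely only on the marginal law of $v_i$ being that of $S$, whereas the error $\E(v_i^2/V)$ has to be kept as a joint expectation — this is exactly the quantity $\p\,\E(S^2/V)$ on the right-hand side, which the negligibility condition (Definition~\ref{defeps}) guarantees is small.
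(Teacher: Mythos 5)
Your proposal is correct and follows essentially the same route as the paper: condition on the flow sizes, recognise $\hat v_i$ as a sum of $\p V$ independent Bernoulli variables of parameter $v_i/V$, apply Le Cam's inequality to get the conditional bound $\p v_i^2/V$, integrate using $v_i\stackrel{\text{dist.}}{=}S$, and sum over $i$. Your added remarks on the singleton specialisation of the total variation distance and on the joint dependence of $v_i$ and $V$ are points the paper passes over silently, but they do not change the argument.
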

From Equation~\eqref{asymp} one gets that the larger the total volume $V$ of packets is,
the better is the approximation of $\E(W_j)/K$ by $\Q_j$. 
\begin{proof}
The number of times $\hat{v}_i$ that the  $i$th flow is sampled in the time interval is  given by
\[
\hat{v}_i=  B_{1}^i+B_{2}^i+\cdots+B_{\p V}^i,
\]
where, due to the mixing condition, $B_\ell^i$ is equal to one if the  $\ell$th sampled
packet is from the $i$th flow, which event occurs with probability $v_i/V$. Note that the total
number of sampled packets is $ \p V $.  

Conditionally on the values of the set ${\cal F}=\{v_1, \ldots, v_K\}$, the variables
$(B_\ell^i, \ell \geq 1)$ are independent Bernoulli variables. For $1\leq i\leq K$, Le
Cam's Inequality~\eqref{lecam} gives therefore the 
relation 
\[
\left\|\P(\hat{v}_i\in\cdot\mid {\cal F})-Q_{\p {v}_i}\right\|_{tv}
\leq \p\frac{v_i^2}{V}.
\]
By integrating with respect to the variables $v_1, \ldots, v_K$,  this gives the relation
\[
\left\|\P(\hat{v}_i\in\cdot)-\Q\right\|_{tv} \leq \p\E\left(\frac{v_i^2}{V}\right).
\]
In particular, for $j\in\N$, $\left|\P(\hat{v}_i=j)-\Q_j\right| \leq \p\E\left(S^2/{V}\right)$. Since
\[
\E(W_j)=\sum_{i=1}^K\P(\hat{v}_i=j),
\]
by summing on $i=1, \ldots, K$,  one gets
\[
\left|\E(W_j)-K\Q_j\right|\leq 
\p K\E\left(\frac{S^2}{V}\right)
\]
and the result follows.
\end{proof}
If the number of packets per flow were  constant, then $\Q$ would be a Poisson distribution
with  parameter $\p  S$,  the  variable $S$  being  in this  case  a  constant. The  above
inequality shows  that at the first  order the expected  value of $W_j$ is  $\p\E(S)$. The
expression of $\Q$, however, indicates that higher order moments of $S$ play a significant
role. For  example, if  the variable $S$  has a  significant variance, then  the classical
rough reduction, which consists of assuming that the size of a sampled elephant is $\p S$,
is no longer valid for estimating the original size of the elephant.

Under the negligibility condition, we deduce that 
$$
\left|\frac{\E(W_j)}{K}-\Q_{j}\right|\leq \p\varepsilon,
$$ where $\varepsilon$ appears in Definition~\ref{defeps} and is assumed to much less than
1. This  implies that Inequality~\eqref{asymp} is  tight and the  quantity $\E(W_j)/K$ can
accurately be approximated by the quantity $\mathbb{Q}_j$, when no flows are dominating in
traffic.

We are now ready to state the main result needed for estimating  the number $K$ of
elephants from sampled data.  
\begin{proposition}[Asymptotic Mean Values]
Under  the same assumptions as those of  Proposition~\ref{boundobs}, 
\begin{equation}\label{equiv}
\lim_{K\to +\infty} \frac{\E(W_{j+1})}{\E(W_{j})}\sim 1-\frac{a+1}{j+1}
\end{equation}
and
\begin{equation}\label{Kel}
\lim_{K\to +\infty} \frac{\E(W_j)}{K}\sim a(\p B_{min})^{a} \frac{\Gamma(j-a)}{j!},
\end{equation}
if $B_{max}>>1$ and $\p B_{min}<<1$, where $\Gamma$ is the classical Gamma function defined by
\[
\Gamma(x)=\int_{0}^{+\infty} u^{x-1} e^{-u}\,du, \quad x> 0.
\]
\end{proposition}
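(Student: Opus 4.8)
The plan is to pass to the limit $K\to\infty$ in Proposition~\ref{boundobs} and then to analyze the resulting quantities $\Q_j$ directly. As $K$ grows, the total volume satisfies $V\geq v_1+\cdots+v_K\geq K B_{min}$, so the right-hand side $\p\E(S^2/V)$ of Inequality~\eqref{asymp} tends to $0$ (the truncation at $B_{max}$ keeps $\E(S^2)$ finite, which is what matters when $a\leq 2$). Consequently $\E(W_j)/K\to\Q_j$ and $\E(W_{j+1})/\E(W_j)\to\Q_{j+1}/\Q_j$, so it suffices to establish the stated equivalents for $\Q_j$ and for the ratio $\Q_{j+1}/\Q_j$.

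First I would obtain a closed form for $\Q_j$. Inserting the Pareto density $f_S(x)=aB_{min}^a x^{-a-1}$, $x\geq B_{min}$, into the definition $\Q_j=\E\bigl((\p S)^j e^{-\p S}/j!\bigr)$ of Proposition~\ref{boundobs} and substituting $u=\p x$ gives
\begin{equation*}
\Q_j=\frac{a(\p B_{min})^a}{j!}\int_{\p B_{min}}^{\,\p B_{max}} u^{\,j-a-1}e^{-u}\,du,
\end{equation*}
an incomplete Gamma integral, where the upper limit is $\p B_{max}$ if $S$ is truncated at $B_{max}$ and $+\infty$ otherwise. The core step is then to let the two cutoffs disappear: under $\p B_{min}\ll 1$ the lower limit tends to $0$, and under $B_{max}\gg 1$ the upper limit is pushed out far enough that, thanks to the factor $e^{-u}$, the tail contribution is negligible, so that for $j>a$
\begin{equation*}
\int_{\p B_{min}}^{\,\p B_{max}} u^{\,j-a-1}e^{-u}\,du \longrightarrow \int_0^{+\infty} u^{\,j-a-1}e^{-u}\,du=\Gamma(j-a),
\end{equation*}
which yields Relation~\eqref{Kel}.

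Finally, forming the ratio cancels the prefactor $a(\p B_{min})^a$, and the functional equation $\Gamma(j+1-a)=(j-a)\Gamma(j-a)$ together with $(j+1)!=(j+1)\,j!$ gives
\begin{equation*}
\frac{\Q_{j+1}}{\Q_j}\sim\frac{j-a}{j+1}=1-\frac{a+1}{j+1},
\end{equation*}
which is Relation~\eqref{equiv}.

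The main obstacle I anticipate is controlling the two limiting regimes simultaneously and rigorously. The convergence of the incomplete Gamma integral to $\Gamma(j-a)$ requires $j>a$; for the small indices $j\leq a$ the integral $\int_0^{+\infty} u^{\,j-a-1}e^{-u}\,du$ diverges at the origin and the regularizing lower cutoff $\p B_{min}$ genuinely cannot be sent to $0$, so the clean equivalents hold only for $j$ large enough and the borderline cases need separate treatment. Equally delicate is verifying that $\p\E(S^2/V)\to 0$: for a shape parameter $a\leq 2$ the second moment of an untruncated Pareto law is infinite, so this step relies essentially on the truncation at $B_{max}$ (equivalently, on the negligibility condition of Definition~\ref{defeps}) to keep $\E(S^2)$ finite as $K\to\infty$.
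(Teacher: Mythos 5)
Your proposal follows essentially the same route as the paper: pass to the limit using the vanishing of the error bound in Proposition~\ref{boundobs}, write $\Q_j$ as an incomplete Gamma integral via the substitution $u=\p x$, use $\p B_{min}\ll 1$ to replace it by $\Gamma(j-a)$, and conclude via $\Gamma(x+1)=x\Gamma(x)$. If anything you are more careful than the paper, which silently assumes $j>a$ for the convergence of the integral at the origin and dismisses the error term with a one-line appeal to the law of large numbers.
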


\begin{proof}
For $j\geq 1$,
\begin{equation*}
\Q_{j}=\E\left(\frac{{\left(\p S\right)}^j}{j!}e^{-{\p}S}\right) 
\sim aB_{min}^a \frac{\p^{a+1}}{j!}\int_{B_{min}}^{+\infty} (\p u)^{j-a-1} e^{-\p u}\,du
\end{equation*}
and then 
\begin{equation*}
\Q_{j}\sim aB_{min}^a \frac{\p^{a} }{j!} \int_{\p B_{min}}^{+\infty} u^{j-a-1} e^{-u}\,du 
\sim a(\p B_{min})^{a} \frac{\Gamma(j-a)}{j!},
\end{equation*}
since $\p B_{min}\sim 0$. Therefore, by using the relation $\Gamma(x+1)=x\Gamma(x)$ we obtain  the equivalence
\[
\frac{\Q_{j+1}}{\Q_j}\sim \frac{j-a}{j+1}.
\]
The proposition follows by using the fact that the upper bound of Equation~\eqref{asymp} of
Proposition~\ref{boundobs} goes to $0$ by the law of large numbers. 
\end{proof}

As it will be seen later in the next section, Relation~\eqref{equiv} is used to estimate the exponent $a$ of the Pareto distribution of the number of packets of elephants, the quantities $\E(W_j)$ and $\E(W_{j+1})$ being  easily derived from sampled traffic.  The quantity  $K$ will be estimated  from Relation~\eqref{Kel}. The estimation of the parameter $B_{min}$
from sampled traffic as well as the correct choice of the integer $j$ will be discussed  in
the next section.

\section{Applications}\label{ssec}
\subsection{Traffic parameter inference algorithm}
In  this section,  it  is assumed  that only  sampled  traffic is  available. The  methods
described in  Section~\ref{experiments} to infer  the statistical properties of  the flows
cannot  be applied and another  algorithm  has  to be  defined. For  the
experiments carried  out in the  present section, the  sampling factor $\p=1/\k$  has been
taken  equal to  $1/100$.   To infer  flow characteristics,  we  have to  give the  proper
definition of the  mouse and elephant dichotomy (the parameter  $B_{min}$) and to estimate
the  coefficient  of   the  corresponding  Pareto  distribution  (the   parameter  $a$  in
Relation~\eqref{pareto}).

Relation~\eqref{equiv} gives the following equivalence, for $j\geq 1$ sufficiently large
so that the impact of mice on $\E(W_j)$ is negligible, 
\begin{equation}\label{est1}
a\sim a(j)\stackrel{\text{def.}}{=}(j+1)\left(1-\frac{\E(W_{j+1})}{\E(W_j)}\right)-1,
\end{equation}
and Relation~\eqref{Kel} yields  an estimate of the number of elephants, i.e. the number of
flows with a  number of packets greater than or equal to  $B_{min}$; we specifically have
\begin{equation}\label{est2}
K\sim K(j)\stackrel{\text{def.}}{=}\frac{j!\,\E(W_j)}{a(j)(\p B_{min})^{a(j)} \Gamma(j-a(j))}.
\end{equation}
These estimations greatly  depend on some of the key parameters used to obtain a convenient and
confident  Pareto representation of the size of the flows, in particular the size of the time
window $\Delta$ and the lower bound $B_{min}$ for the elephants. The variable $\Delta$ is chosen so that 
\begin{enumerate}
\item the number of flows sampled twice is sufficiently large in order to obtain a
significant number of  samples so that the estimation of the mean values of the random
variables $W_j$ for $j \geq 2$ is accurate; this requires that $\Delta$ should not be
 too small, 
\item  $\Delta$ is not too large in order to preserve  the unimodal Pareto representation (see Section~\ref{experiments} for a discussion).
\end{enumerate} 
To count the average number of  flows sampled $j$ times,  the parameter $j$  should be 
chosen as large as possible in order to neglect the impact of mice (for which the Pareto representation does not
hold) but not too large so that the statistics are robust to compute the mean value
$\E(W_j)$. 

In the experimental work reported below, special attention has been paid  to the choice of the {\em
  universal} constants, i.e., those  constants used in the analysis of sampled data, that do not depend on the traffic trace considered. 
In our opinion, this is a crucial in  an accurate  inference of traffic parameters from sampled data. These
  constants are defined in the algorithm given in Table~\ref{algoo}. 

\begin{table}[hbtp]
\caption{Algorithm used to identify $\Delta$ and the Pareto parameter from sampled traffic.\label{algoo}}
\hrule 

\smallskip

\begin{itemize}
\item[---] 
Choose $\Delta$  so that $80 \leq \E[W_2] \leq 100$;
\item[---] Choose $j$ so that $|a(j)-a(j+1)|$ computed with Equation~\eqref{est1}  is
  minimized with for all $j$ such that $\E[W_j] \geq 5$.
\item[---] $B_{min}$  is the smallest integer so that the probability that a  flow of size
  greater than $B_{min}$ is sampled more than $j$  times is greater than $\p/10$;
\end{itemize}
\hrule
\end{table}
\medskip

\subsection{Experimental results}

Concerning the estimation of the constants $B_{min}$, the numerical results obtained by
using the algorithm given in Table~\ref{algoo}  are
presented in Table~\ref{tab3}, where the values of the different $B_{min}$ estimated by
the algorithm are compared against  the values given in Section~\ref{experiments}. As it
can be observed, the proposed algorithm  yields  a rather  conservative definition of
elephants (i.e.,  flows of size greater than or equal to $B_{min}$). 

\begin{table}[htbp]
\caption{Elephants for the France Telecom ADSL and the Abilene traffic traces.\label{tab3}}
\begin{tabular}{|l|l|l|l|l|l|}\hline
&{\tiny ADSL A}& {\tiny ADSL B Up} & {\tiny ADSL B Down }&{\tiny Abilene A} &{\tiny Abilene  B}  \\ \hline
 $B_{min}$&20 & 29&39&89&79\\ \hline
estimated $B_{min}$&21 & 45&45&77&77\\ \hline
\end{tabular}
 \end{table}

The main  results are gathered  in Table~\ref{Results} giving  the quantities $K$  and $a$
estimated by  using Equations~\eqref{est1}  and \eqref{est2} for  different values  of the
parameters $j$.  These values are compared  against the experimental  values $a_{exp}$ and
$K_{exp}$, referred  to as  the ``real'' $a$  and $K$  obtained from the  complete traffic
traces in Section~\ref{experiments}.   The accuracy of the estimation  of $K$ is generally
quite good except for the Abilene A  trace where the error is significant although not out
of  bound.   A look  at  the corresponding  figure  in  Section~\ref{experiments} gives  a
plausible explanation for  this discrepancy: For this trace,  the Pareto representation is
not very precise.

Finally, it is worth noting from Table~\ref{Results} that the
estimation of the important parameter $a$ describing the statistics of flows   is also
quite accurate. The error in this table is defined as 
\[
\frac{K(j)-K_{exp}}{K_{exp}}.
\]

\begin{table*}[hbtp]
\caption{Estimations of the Number of Elephants from Sampled traffic}\label{Results}
\begin{center}
\begin{tabular}{|l|l|l|l|l|l|l|l|l|l|}\hline
Trace   &$\Delta$       &$j$ &${\scriptstyle \E(W_{j}})$&${\scriptstyle \E(W_{j{+}1})}$
&$a_{exp}$ &$a(j)$ &$K_{exp}$  &$K(j)$ &Error \\\hline
ADSL A        &5s      &3      &12.89 &3.33 &1.85    &1.95    &943.71 &1031.04&9.25\% \\\hline
ADSL B Do &15s     &4     &9.7  &4.75 &1.49   &1.55    &414.90  &404.13&2.59\% \\\hline
ADSL B Up    &15s     &4     &7.46  &2.97 &1.97    &2.00       &453.01  &462.68&2.13\% \\\hline
ABILENE A     &1s      &5     &6.04  &3.21 &1.38    &1.81    &217.44 &270.79 &24.53\% \\\hline
ABILENE B     &1s      &5    &6.1  &3.7 &1.36    &1.51    &209.12  &197.12&5.74\% \\\hline
\end{tabular}

\end{center}
\end{table*}

\noindent
{\bf Remark.} As pointed out  by Loiseau \etal~\cite{Loiseau:02}, the determination of $\Delta$ is crucial. 
Recall it is determined explicitly by the first step of our algorithm, see Table~\ref{algoo}.

\section{Conclusion}
\label{conclusion}

We have developed in this paper one method of characterizing flows in IP traffic by a few
parameters and  another one of inferring these parameters from sampled data obtained
via deterministic 1-out-of-$k$ sampling. For this purpose, we have made some restrictive
assumptions, which are in our opinion essential in order to establish an accurate
characterization of flows. The basic principle we have adopted consists of describing
flows in successive observation windows of limited length, which has to satisfy two
contradicting requirements. On the one hand, observation windows shall not to be too large
in order to preserve a description of flow statistics  as simple as possible, for instance
their size by means of a simple Pareto distribution.  
 
On the other hand, a sufficiently large number of packets has to be present in each
observation window in order to be able of computing flow characteristics with sufficient
accuracy, in particular  the tail of the distribution of the flow size. By assuming that
large flows (elephants) have a size which is Pareto distributed, we have developed an
algorithm to determine the optimal observation window length together with the parameters
of the Pareto distribution. The location parameter $B_{min}$ (see Equation~\eqref{pareto})
leads to a natural division  of the total flow population into two sets: those flows with
at least $B_{min}$ packets, referred to as elephants, and those flows with less than
$B_{min}$ packets,called mice. This method of characterizing flows has been tested against
traffic traces from the France Telecom and Abilene networks carrying completely different
types of traffic. 

For interpreting sampled data, we have made assumptions on the sampling process. We have
specifically supposed that flows are sufficiently interleaved in order to introduce some
randomness in the packet selection process (mixing condition) and that there are no
dominating flows so that there is no bias with regard to the  probability of sampling  a
flow (negligibility condition). These two assumptions allows us to establish rigorous
results for the number of times an elephant is sampled, in particular for the mean values
of the random variables $W_j$, $j \geq 1$.  

Of course, when analyzing sampled data, the original flow statistics are not known. In
particular, the length of the observation window necessary to characterize the flow size
by means of a unique Pareto distribution is unknown. To overcome this problem, we have
proposed an algorithm to fix the observation window length and the minimal length of
elephants. Then, by choosing the index $j$ sufficiently large so as to neglect the impact of
mice, the  theoretical results are used to complete the flow parameter inference. This
method has been  tested against the  Abilene and the France Telecom traffic traces and
yields satisfactory results. 

\providecommand{\bysame}{\leavevmode\hbox to3em{\hrulefill}\thinspace}
\providecommand{\MR}{\relax\ifhmode\unskip\space\fi MR }
\providecommand{\MRhref}[2]{%
  \href{http://www.ams.org/mathscinet-getitem?mr=#1}{#2}
}
\providecommand{\href}[2]{#2}


\begin{thebibliography}{10}

\bibitem{Abry}
P.~Abry and D.~Veitch, \emph{Wavelet analysis of long range dependent traffic},
  IEEE Trans. Information Theory \textbf{44} (1998), no.~1, 2--15.

\bibitem{Nadia}
N.~Ben Azzouna, F.~Cl\'erot, C.~Fricker, and F.~Guillemin, \emph{A flow-based
  approach to modeling {ADSL} traffic on an {IP} backbone link}, Annals of
  Telecommunications \textbf{59} (2004), no.~11-12, 1260--1299.

\bibitem{ICC}
N.~Ben Azzouna, F.~Guillemin, S.~Poisson, P.~Robert, C.~Fricker, and
  N.~Antunes, \emph{Inverting sampled {ADSL} traffic}, Proc. ICC 2005 (Seoul,
  Korea), May 2005.

\bibitem{Barakat}
C.~Barakat, P.~Thiran, G.~Iannaccone, C.~Diot, and P.~Owezarski, \emph{A
  flow-based model for internet backbone traffic}, Proc. ACM SIGCOMM Internet
  Measurement Workshop (Marseille), November 2002.

\bibitem{Barbour}
A.~D. Barbour, Lars Holst, and Svante Janson, \emph{Poisson approximation}, The
  Clarendon Press Oxford University Press, New York, 1992, Oxford Science
  Publications.

\bibitem{Chabchoub:01}
Yousra Chabchoub, Christine Fricker, Fabrice Guillemin, and Philippe Robert,
  \emph{Deterministic versus probabilistic packet sampling in the {I}nternet},
  Proceedings of ITC'20, June 2007.

\bibitem{NetFlow}
CISCO, http://www.cisco.com/warp/public/netflow/index.html.

\bibitem{Crovella}
M.~Crovella and A.~Bestravos, \emph{Self-similarity in world wide web.
  {E}vidence and possible causes}, IEEE/ACM Trans. on Networking (1997),
  835--846.

\bibitem{Num}
Peter Deuflhard and Andreas Hohmann, \emph{Numerical analysis in modern
  scientific computing: an introduction}, Texts in Applied Mathematics,
  Springer, 2003.

\bibitem{Duffield}
Nick Duffield, Carsten Lund, and Mikkel Thorup, \emph{Properties and prediction
  of flow statistics properties and prediction of flow statistics}, ACM SIGCOMM
  Internet Measurement Workshop, November 2002, pp.~6--8.

\bibitem{Duffield2}
\bysame, \emph{Estimating flow distributions from sampled flow statistics},
  IEEE/ACM Trans. Netw. \textbf{13} (2005), no.~5, 933--946.

\bibitem{Estan}
C.~Estan, K.~Keys, D.~Moore, and G.~Varghese, \emph{Building a better
  {N}et{F}low}, Proc. ACM Sigcomm'04 (Portland, Oregon, USA), August 30 --
  September 3 2004.

\bibitem{859719}
Cristian Estan and George Varghese, \emph{New directions in traffic measurement
  and accounting: Focusing on the elephants, ignoring the mice}, ACM Trans.
  Comput. Syst. \textbf{21} (2003), no.~3, 270--313.

\bibitem{279346}
A.~Feldmann, A.~C. Gilbert, W.~Willinger, and T.~G. Kurtz, \emph{The changing
  nature of network traffic: scaling phenomena}, SIGCOMM Comput. Commun. Rev.
  \textbf{28} (1998), no.~2, 5--29.

\bibitem{300387}
Anja Feldmann, Jennifer Rexford, and Ram\'{o}n C\'{a}ceres, \emph{Efficient
  policies for carrying web traffic over flow-switched networks}, IEEE/ACM
  Trans. Netw. \textbf{6} (1998), no.~6, 673--685.

\bibitem{Gong}
W.~Gong, Y.~Liu, V.~Misra, and D.~Towsley, \emph{On the tails of web file size
  distributions}, Proceedings of 39th Allerton Conference on Communication,
  Control, and Computing, 2001.

\bibitem{1133559}
Nicolas Hohn and Darryl Veitch, \emph{Inverting sampled traffic}, IEEE/ACM
  Trans. Netw. \textbf{14} (2006), no.~1, 68--80.

\bibitem{KM}
M.M. Krunz and A.M. Makowski, \emph{Modeling video traffic using m/g/$\infty$
  input processes: acompromise between markovian and lrd models}, IEEE Journal
  on Selected Areas in Communications \textbf{16} (1998), no.~5, 733--748.

\bibitem{Loiseau:02}
Patrick Loiseau and Paulo~Gon\c{c}alves \and St\'ephane Girard\and Florence
  Forbesand Pascale Primet Vicat-Blanc, \emph{Maximum likelihood estimation of
  the flow size distribution tail index from sampled packet data}, ACM
  Sigmetrics Conference (Seattle, WA, USA), June 2009.

\bibitem{Loiseau}
Patrick Loiseau, Paulo Gon\c{c}alves, and Pascale~Primet Vicat-Blanc, \emph{A
  comparative study of different heavy tail index estimators of the flow size
  from sampled data}, GridNets '07: Proceedings of the first international
  conference on Networks for grid applications (ICST, Brussels, Belgium,
  Belgium), ICST (Institute for Computer Sciences, Social-Informatics and
  Telecommunications Engineering), 2007, pp.~1--8.

\bibitem{Mitzen}
Michael Mitzenmacher, \emph{Dynamic models for file sizes and double pareto
  distributions}, Internet Mathematics \textbf{1} (2002), no.~3, 301--33.

\bibitem{Taft}
Konstantina Papagiannaki, Nina Taft, Supratik Bhattacharyya, Patrick Thiran,
  Kav{\'e} Salamatian, and Christophe Diot, \emph{A pragmatic definition of
  elephants in internet backbone traffic}, Internet Measurement Workshop, ACM,
  2002, pp.~175--176.

\bibitem{Rolland}
C.~Rolland, J.~Ridoux, and B.~Baynat, \emph{Using {L}it{G}en, a realistic {IP}
  traffic model, to evaluate the impact of burstiness on performance}, Proc.
  SIMUTools 2008 (Marseille, France), 2008.

\end{thebibliography}
\end{document}